\documentclass[conference]{IEEEtran}
\IEEEoverridecommandlockouts
\usepackage{cite}
\usepackage{algorithmic}
\usepackage{textcomp}
\usepackage{xcolor}
\usepackage{hyperref}
\usepackage{dcolumn}
\usepackage{amsmath,amsfonts}
\usepackage{amssymb,mathtools}
\usepackage{amsthm}
\usepackage{newtxtext,newtxmath}
\usepackage{tikz}
\usetikzlibrary{plotmarks}
\usetikzlibrary{datavisualization}
\newtheorem{theorem}{Theorem}

\newtheorem{lemma}[theorem]{Lemma}

\newtheorem{corollary}[theorem]{Corollary}

\newtheorem{prim*}{prim}

\newtheorem{remark}{Remark}

\newcommand{\sfTr}[1]{\mathrm{Tr}\left\{#1\right\}}

\newcommand{\opTr}[1]{\mathrm{Tr}\left\{#1\right\}}

\newcommand{\bbTr}[1]{\mathrm{Tr}\left\{#1\right\}}

\def\sfK{\mathsf{K}}

\def\sfR{\mathsf{R}}
\def\sfV{\mathsf{V}}
\def\sfJ{\mathrm{J}}

\def\sfW{\mathsf{W}}

\def\bbL{\mathbb{L}}

\newcommand{\Tr}[1]{\mathrm{Tr}\left\{#1\right\}}

\def \e{\mathrm{e}}
\def \m{\mathrm{m}}
\def \e1{(\mathrm{e})}
\def \m1{(\mathrm{m})}

\def \cL{{\cal L}}

\def \cH{{\cal H}}

\def \cX{{\cal X}}
\def \cL{{\cal L}}
\def \sofh{{\cal S}({\cal H})}

\def \bbc{{\mathbb C}}

\def \sofc2{{\cal S}({\mathbb C}^2)}

\def \prior{\pi(\theta)}

\def \htheta{\hat{\theta}}

\def \cBNH{{\cal C}_\mathrm{BNH}}

\def \cBSLD{{\cal C}_\mathrm{BSLD}}

\def \cBayes{{\cal C}_\mathrm{MI}}
\def \cMI{{\cal C}_\mathrm{MI}}
\def \cBGM{{\cal C}_\mathrm{BGM}}

\def \rhoB{S_{\rm B}}
\def \MBj{{D}_{\mathrm{B},j}}
\def \MBk{{D}_{\mathrm{B},k}}

\def \muTwo{\mathsf{M}}

\newcommand{\InnerB}[2]{\langle #1, #2\rangle_{S_B}}
\newcommand{\ket}[1]{|#1\rangle}
\newcommand{\bra}[1]{\langle#1|}

\def\BibTeX{{\rm B\kern-.05em{\sc i\kern-.025em b}\kern-.08em
    T\kern-.1667em\lower.7ex\hbox{E}\kern-.125emX}}
    
\begin{document}

\title{Bayesian Gill-Massar Bound: An Attainable Lower Bound for Qubit Parameter Estimation 

}

\author{\IEEEauthorblockN{1\textsuperscript{st} Ke-Han Zhao}
\IEEEauthorblockA{\hspace*{-10pt}\textit{Graduate\,School\,of\,Informatics\,and\,Engineering\ \ }\\
\textit{The University of Electro-Communications}\\
Tokyo, Japan\\
c2241016@gl.cc.uec.ac.jp}
\and
\IEEEauthorblockN{2\textsuperscript{nd} Koichi Yamagata}
\IEEEauthorblockA{\textit{Institute\,of\,Science\,and\,Engineering}\\ 
\textit{Kanazawa University}\\
Kanazawa, Japan\\
yamagata@se.kanazawa-u.ac.jp}
\and
\IEEEauthorblockN{3\textsuperscript{rd} Jun Suzuki}
\IEEEauthorblockA{\hspace*{0pt}\textit{School\,of\, Informatics\,and\, Engineering}\\
\textit{The University of Electro-Communications}\\
Tokyo, Japan\\
junsuzuki@uec.ac.jp}
}

\maketitle

\begin{abstract}
We study lower bounds for Bayesian quantum parameter estimation, with a particular focus on qubit models. While several lower bounds on the Bayes risk have been proposed, including the Bayesian symmetric logarithmic derivative (B-SLD) type bound and the Bayesian Nagaoka-Hayashi (B-NH) bound, there is no definite proof available to show they are attainable except for special cases. Thus, identifying attainable bounds together with their corresponding optimal measurement strategies remains a central open problem in Bayesian quantum estimation. 
In this work, we introduce a new Bayesian lower bound, referred to as the Bayesian Gill-Massar (B-GM) bound, inspired by the logic of Gill-Massar bound in point estimation. We derive an analytical closed-form expression of the bound and show that it is attainable for any qubit model. In particular, we prove that the optimal Bayesian strategy can be realized by a projection-valued measure associated with a single effective direction determined by the weight matrix and the B-SLD-type Fisher information matrix. 
We provide numerical comparisons between the B-GM, B-NH, and B-SLD-type bounds in higher-dimensional models. Our results show that the B-GM bound has a limitation in high-dimensional models with few parameters, since it can be negative. 
\end{abstract}

\begin{IEEEkeywords}
Quantum parameter estimation, Bayesian lower bound, Optimal measurement.
\end{IEEEkeywords}

\section{Introduction}
Bayesian quantum parameter estimation provides a natural framework and is particularly effective when the sample size is limited and/or some prior knowledge is available in advance. In classical Bayesian estimation, the expected a posteriori (EAP) estimator (the minimum mean-square error estimator, MMSE) minimizes the Bayes risk and the closed expression for the minimum error is known. Unlike classical estimation, determining the optimal estimator in the quantum setting is still challenging. This is due to the additional degrees of freedom, quantum measurements, need to be optimized. This motivates the study of establishing lower bounds on the Bayes risk and identifying the corresponding optimal estimation strategies.

Following the introduction of quantum Bayesian estimation by Personick \cite{personick_thesis}, several Bayesian lower bounds on the Bayes risk have been proposed based on quantum versions of the concepts in classical Bayesian estimation \cite{personick_thesis,personick71,hlg70,hayashi_review,tsang_zk,lt16,rd2020,sidhu2020geometric,demkowicz2020,tsang_gl}. In particular, a commonly used bound is to consider a non-commutative analogue of the MMSE, which was proposed originally by Personick \cite{personick_thesis,personick71}. This bound is tight for single parameter estimation, since one does not confront the problem of non-commutativity. This bound was recently generalized to multi-parameter models by others \cite{rd2020,sidhu2020geometric,demkowicz2020}. In this paper, we shall refer it to as the \textit{Bayesian symmetric logarithmic derivative} (B-SLD) \textit{type bound}. Recently, a new Bayesian lower bound, called the Bayesian Nagaoka-Hayashi (B-NH) bound, was proposed using a different approach \cite{bnhbound}. This bound extends the Nagaoka-Hayashi bound for point estimation \cite{lja2021}. It has been shown that the B-NH bound can be efficiently computed via semidefinite programming (SDP), and that it is tighter than the B-SLD-type bound.

However, finding the best measurement and estimator is challenging due to the optimization about all possible measurements. This is true even for qubit systems up to today. This is in contrast to point estimation for qubit models where the general formula for the minimum of the covariance matrix under locally unbiased estimation is known. The problem was first solved by Nagaoka for two parameter estimation and then by Hayashi for three parameter estimation \cite{nagaoka89,hayashi97}. The unified proof is independently obtained by Gill-Massar \cite{gill2000state}, and we call it as the Gill-Massar (GM) bound for simplicity. In this work, we are motivated to extend this approach to the Bayesian setting, and introduce a new lower bound for Bayesian quantum estimation. We call this new bound the \textit{Bayesian Gill-Massar} (B-GM) \textit{bound}. We show that this bound is tight for \textit{any} qubit model, and hence we have solved the open problem at least for two-dimensional quantum systems. In particular, the optimal measurement and estimator are constructed explicitly. We next analyze the B-GM bounds for two examples in high-dimensional systems. Our results show that the B-GM bound has a limitation in high-dimensional models with few parameters: it can be worse than the B-SLD-type bound, or it can be negative. 

This paper is organized as follows.
In Sec.~II, we introduce the problem setting and review existing bounds. In Sec.~III, we present the proposed bound and analyze its properties. Sec.~IV provides numerical examples, and Sec.~V concludes the paper.

\section{Preliminaries}
\subsection{Problem Settings}
Let $\cH$ be a $d$-dimensional Hilbert space, and denote the totality of density matrices on $\cH$ by $\sofh$. A quantum parametrized model, $\{S_\theta|\theta\in\Theta\}$, with $\theta=(\theta_1,\theta_2,\dots,\theta_n)$. The measurement, also known as positive-operator-valued measure (POVM), is expressed as a positive semi-definite matrix set $\Pi=\{\Pi_x\}$ with an index set $\cX$, satisfying
\begin{equation*}
  \forall x\in\cX,\Pi_x\geq0,\sum_{x\in\cX} \Pi_x=I_d,
\end{equation*}
where $I_d$ is the $d\times d$ identity matrix. Measurement outcome is a discrete random variable $X$, which obeys the probability mass function:
\begin{equation*}
  X\sim p_\theta(x)=\Tr{S_\theta\Pi_x}.
\end{equation*} 
Note that in the Bayesian view, the parameter $\theta$ obeys a prior distribution $\pi(\theta)$. The expectation value for $X$ with respect to $p_\theta(x)$ is denoted by $E_\theta[X|\Pi]=\sum_x xp_\theta(x)$. To make an estimate for each parameter $\theta_i$, we introduce the estimator $\htheta=(\htheta_i)$. We also define the quantum decision $\hat\Pi:=\{\Pi,\hat\theta\}$.  

In the following discussion, we analyze the Bayes risk in terms of the mean-square error (MSE) matrix whose $(j,k)$ component is defined by
\begin{equation}\label{def:mse}
  \sfV_{\theta,jk}[\hat\Pi]:=E_\theta \big[(\htheta_j(X)-\theta_j)(\htheta_k(X)-\theta_k)\big|\Pi\big].
\end{equation}
This gives the following expression for the Bayes risk:
\begin{equation}\label{def:Brisk2}
  \sfR [\hat\Pi]:= \int_\Theta d\theta\,\prior\sfTr{\sfW\ \sfV_\theta[\Pi,\htheta]},
\end{equation}
where the positive definite $\sfW\in \mathbb{R}^{n\times n}$ is the weight matrix. Throughout the paper, $\sfW$ is assumed to be parameter-independent, but the generalization to $\sfW=\sfW(\theta)$ case is also possible \cite{bnhbound}.

The aim of Bayesian estimation is to find the optimal POVM and estimator $\hat\Pi=(\Pi, \hat{\theta})$ such that the Bayes risk is minimized:
\begin{equation}
  \cMI := \min_{\Pi,\hat{\theta}} \sfR [\Pi,\hat{\theta}].
\end{equation}
To find the closed form of $\cMI$ is still a central open problem in Bayesian quantum estimation. In this sense, it is important to find a computable lower bound for $\cMI$, which is as tight as possible.

\subsection{Bayesian SLD Type Bound}
Here we summarize the bound proposed by Personick \cite{personick71} and its generalization \cite{rd2020,sidhu2020geometric,demkowicz2020}. First, define the averaged states and the first moment by
\begin{align}
\rhoB&:=\overline{ S_\theta},\ \MBj:=\overline{\theta_j S_\theta}, 
\end{align} 
where $\overline{\cdot}:=\int_{\Theta}  \cdot\ \pi(\theta)d\theta $ denotes the average over the prior distribution. 
Define an inner product $\InnerB{\cdot}{\cdot}$ as
\begin{equation}
  \InnerB{A}{B}=\frac{1}{2}\Tr{\rhoB (A^\dagger B+BA^\dagger)}.
\end{equation} 
It is a proper inner product when $\rhoB$ is positive definite.
The B-SLD-type bound is defined as follows \cite{personick71,rd2020,sidhu2020geometric,demkowicz2020}.
\begin{equation}
  \cMI\geq\cBSLD:=\sfTr{\sfW (\muTwo-{\sfK})}.
\end{equation}
Here, the first term $\muTwo$ is the second-moment matrix of the random variable $\theta$, whose $(j,k)$ component is defined by $\muTwo_{jk}:=\overline{\theta_j\theta_k}$, and the second term $\sfK$ is defined by
\begin{equation}
  {\sfK}_{jk}
  :=\InnerB{L_j}{L_k},
\end{equation}
where the hermitian matrices $L_j$ satisfy the SLD-type equation, $  \MBj=\frac{1}{2}(\rhoB L_j+L_j\rhoB)$ ($j=1,2,\ldots,n$). In this work, we name the matrix $\sfK$ as the Bayesian symmetric logarithmic derivative type Fisher information matrix (B-SLD-type FIM), even though this is not related to any logarithmic derivative.

\subsection{Bayesian Nagaoka-Hayashi bound \cite{bnhbound}}
To introduce the B-NH bound, we first define the following hermitian matrices on $\cH$:
\begin{align*}
  \mathbb{L}_{jk}[\hat\Pi]&:=\sum_x \hat{\theta}_{j}(x)\Pi_x\hat{\theta}_{k}(x)\quad(j,k=1,2,\dots,n),\\
  X_j[\hat\Pi]&:=\sum_{x}\hat{\theta}_{j}(x)\Pi_x \quad(j=1,2,\ldots,n).
  \end{align*}
With these definitions, the MSE matrix is expressed as
  \begin{equation*}
  \sfV_{\theta,jk}[\hat\Pi]=\opTr{S_\theta(\bbL_{jk}[\hat\Pi]-\theta_jX_k[\hat\Pi]-X_j[\hat\Pi]\theta_k+\theta_j\theta_k)},
  \end{equation*}
We next define a matrix and a vector which are defined on the extended Hilbert space \cite{lja2021}.
Consider the Hilbert space $\mathcal{H}_{\text{ex}}:=\bbc^n\otimes\cH$, and define $ \bbL$ on $\mathcal{H}_{\text{ex}}$ whose $(j,k)$ block-matrix component is given by $\bbL_{jk}$.
We also define a matrix-valued column vector $X$ whose $j$th component is $X_j$ by $X=(X_1,X_2,\ldots,X_n)^\intercal$.
Here, we denote the transpose of matrices and vectors with respect to $\bbc^n$ by $(\cdot)^\intercal$.

For a given weight matrix $\sfW$ ($\theta$ independent in this study),
the B-NH bound for the Bayes risk is given as follows.
  \begin{theorem}[Bayesian Nagaoka-Hayashi bound \cite{bnhbound}]\label{thm:BNHbound}\ \\ 
    $\cMI$ is lower bounded by $\cBNH$ as 
      \begin{align*}
      \cMI\ge& \cBNH, \\
      \cBNH:=&\Tr{\sfW\muTwo}\\
      &+\min_{\bbL,X} \bbTr{(\sfW\otimes\rhoB) \bbL}
      -2\sum_{jk} \sfW_{jk}\Tr{X_j\MBk}, 
      \end{align*}
where the optimization is subject to the constraints:\newline
      $\forall j,k,\bbL_{jk}=\bbL_{kj}$, $\bbL_{jk}$, $X_j$ are Hermitian, and $\bbL\geq {X} X^\intercal$.
  \end{theorem}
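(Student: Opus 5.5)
The plan is to show that the Bayes risk of any decision $\hat\Pi=(\Pi,\htheta)$ can be written as an affine functional of the pair $(\bbL[\hat\Pi],X[\hat\Pi])$, and that this pair always satisfies the stated constraints. Minimizing over the larger feasible set then gives the inequality. The first step is to average the expression of $\sfV_\theta$ in terms of $\bbL_{jk}$ and $X_j$ over the prior. Since $\sfW$ does not depend on $\theta$ and the trace is linear, $\overline{S_\theta}=\rhoB$ and $\overline{\theta_j S_\theta}=\MBj$ give
\begin{align*}
\sfR[\hat\Pi]=&\sum_{jk}\sfW_{kj}\Big(\Tr{\rhoB\bbL_{jk}}-\Tr{\MBj X_k}\\
&-\Tr{X_j\MBk}+\muTwo_{jk}\Big).
\end{align*}
Using the symmetry of $\sfW$, the two cross terms combine into $-2\sum_{jk}\sfW_{jk}\Tr{X_j\MBk}$. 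The quadratic term is $\sum_{jk}\sfW_{jk}\Tr{\rhoB\bbL_{jk}}=\bbTr{(\sfW\otimes\rhoB)\bbL}$, which I will check blockwise; only the symmetry of $\sfW$ and $\bbL$ is needed here.

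The second step is feasibility. The matrices $\bbL_{jk}$ and $X_j$ are Hermitian because $\htheta_j(x)$ is real and $\Pi_x\ge0$. Also $\bbL_{jk}=\bbL_{kj}$ because the scalars commute. The key inequality is $\bbL\ge XX^\intercal$. To prove it, I will write
\[
\bbL-XX^\intercal=\sum_x(\htheta(x)\otimes I)\Pi_x(\htheta(x)^\intercal\otimes I)-XX^\intercal ,
\]
with $\htheta(x)\in\mathbb{R}^n$, and compare it with the Schur-complement / Naimark-type argument from the point-estimation Nagaoka--Hayashi bound \cite{lja2021}. Concretely, for any vector $\psi=(\psi_j)\in\mathcal{H}_{\text{ex}}$, set $v_x:=\sum_j\htheta_j(x)\psi_j\in\cH$. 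Then
\[
\langle\psi,\bbL\psi\rangle=\sum_x\langle v_x,\Pi_x v_x\rangle,\qquad
\langle\psi,XX^\intercal\psi\rangle=\Big\|\sum_x\Pi_x v_x\Big\|^2 .
\]
Write $\sum_x\Pi_xv_x=\sum_x\Pi_x^{1/2}(\Pi_x^{1/2}v_x)$. By Cauchy--Schwarz together with $\sum_x\Pi_x=I$, this norm squared is at most $\sum_x\|\Pi_x^{1/2}v_x\|^2$. The Cauchy--Schwarz step regards $\sum_x \Pi_x^{1/2}(\cdot)$ as the adjoint of the isometry $\phi\mapsto(\Pi_x^{1/2}\phi)_x$, which is a contraction. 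This yields $\bbL\ge XX^\intercal$.

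Finally, the Bayes risk of every decision equals the objective evaluated at a feasible point, so it is at least the minimum over all feasible $(\bbL,X)$. Minimizing over $\hat\Pi$ then gives $\cMI\ge\cBNH$. The main obstacle I expect is the operator inequality $\bbL\ge XX^\intercal$, where the contraction argument must be stated carefully on $\mathcal{H}_{\text{ex}}$. A secondary concern is existence of the minimum, since the feasible set is unbounded. If needed, I would replace $\min$ by $\inf$, or argue coercivity of the objective when $\rhoB>0$ and $\sfW>0$, which holds because $\bbL\ge XX^\intercal$ controls $\bbL$ from below.
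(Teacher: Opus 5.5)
Your proposal is correct and follows essentially the route behind this theorem. The paper does not prove it and only cites it, but the identity it displays for $\sfV_{\theta,jk}$ in terms of $\bbL_{jk}$ and $X_j$ is exactly your first step: average that affine expression over the prior, check that $(\bbL[\hat\Pi],X[\hat\Pi])$ meets the constraints, and relax to the feasible set. Your contraction argument for $\bbL\ge XX^\intercal$ is equivalent to the usual direct identity $\bbL-XX^\intercal=\sum_x(\htheta(x)\otimes I_d-X)\,\Pi_x\,(\htheta(x)\otimes I_d-X)^\intercal\ge0$, so nothing is missing.
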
	

\section{New Bayesian Bound}
To obtain a new lower bound on the Bayes risk, we follow the same logic as the GM bound in the setting of point estimation \cite{gill2000state}: to derive the GM inequality first and then to obtain a lower bound (see also \cite{yamagataGM}). Upon implementing this approach, we notice that there is no classical Fisher information matrix involved in Bayesian estimation, and hence we need to properly identify it in order to derive a Bayesian version of the GM inequality. 

\subsection{Gill-Massar Type Inequality}
In this section, we first recall an alternative form of the Bayes risk, which is expressed in terms of the measurement degree of freedom.
By extending the result of Ref.~\cite{rd2020}, we have the following expression. 
Define $L_0:=I_d$ and consider the set $\{L_j\}_{j=0}^n$. 
By optimizing estimators, we can show the next lemma where the optimal estimator is given by the EAP.   
\begin{lemma}
The optimal Bayes risk is expressed as optimization over POVMs: 
\begin{equation} \label{eq:alt_CMI}
\cMI = \min_{\Pi}   \sfTr{\sfW\muTwo}-  \sfTr{\sfW_\mathrm{ex} \sfJ_\mathrm{ex}[\Pi]}+1,
\end{equation}
where
\begin{equation}
\begin{split}
    \sfJ_\mathrm{ex}[\Pi]
    :=&\sum_x \left[ \frac{\InnerB{L_j}{\Pi_x}\InnerB{\Pi_x}{L_k}}{\InnerB{I_d}{\Pi_x}}\right]_{j,k=0}^n, \\
    \sfW_\mathrm{ex}:=&\begin{pmatrix}
      1&0\\
      0&\sfW
    \end{pmatrix}.
\end{split}
\end{equation}
\end{lemma}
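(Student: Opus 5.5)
The plan is to fix the POVM $\Pi$ first and minimize the Bayes risk over the estimator $\htheta$ in closed form. Then I rewrite the remaining POVM-dependent term using the inner product $\InnerB{\cdot}{\cdot}$, and finally minimize over $\Pi$. Since $\cMI=\min_{\Pi}\min_{\htheta}\sfR[\Pi,\htheta]$, it suffices to show that for every fixed $\Pi$
\begin{equation*}
\min_{\htheta}\sfR[\Pi,\htheta]=\sfTr{\sfW\muTwo}-\sfTr{\sfW_\mathrm{ex}\sfJ_\mathrm{ex}[\Pi]}+1 .
\end{equation*}

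\textbf{Step 1 (expanding the risk).} Substitute $p_\theta(x)=\Tr{S_\theta\Pi_x}$ into \eqref{def:Brisk2} and exchange the prior average with the sum over $x$. By linearity of the trace this gives
\begin{equation*}
\sfR[\Pi,\htheta]=\sfTr{\sfW\muTwo}+\sum_x\Big[p(x)\,\htheta(x)^\intercal\sfW\htheta(x)-2\,\htheta(x)^\intercal\sfW\, m(x)\Big],
\end{equation*}
where $p(x):=\Tr{\rhoB\Pi_x}$ and $m_j(x):=\Tr{\MBj\Pi_x}$.

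\textbf{Step 2 (optimal estimator).} For each $x$ with $p(x)>0$, the bracket is a strictly convex quadratic in $\htheta(x)$, because $\sfW>0$. It is minimized by $\htheta(x)=m(x)/p(x)$, which is the posterior mean, i.e. the EAP estimator, and the minimum value is $-m(x)^\intercal\sfW m(x)/p(x)$. Outcomes with $p(x)=0$ need a separate argument. In that case $m(x)=0$ as well: $\Tr{S_\theta\Pi_x}\ge0$ averages to zero, so it vanishes for almost every $\theta$ under the prior, and hence so does $\theta_j\Tr{S_\theta\Pi_x}$. Such terms contribute nothing and can be dropped, with the convention $0/0=0$. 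Therefore
\begin{equation*}
\min_{\htheta}\sfR[\Pi,\htheta]=\sfTr{\sfW\muTwo}-\sum_x m(x)^\intercal\sfW m(x)/p(x).
\end{equation*}

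\textbf{Step 3 (inner-product form).} Since $\Pi_x$ and $L_j$ are Hermitian, we have $\InnerB{L_j}{\Pi_x}=\tfrac12\Tr{(\rhoB L_j+L_j\rhoB)\Pi_x}$, using cyclicity of the trace. By the SLD-type equation this equals $\Tr{\MBj\Pi_x}=m_j(x)$. This quantity is real, so $\InnerB{\Pi_x}{L_k}=m_k(x)$ as well. With $L_0=I_d$ we also get $\InnerB{I_d}{\Pi_x}=p(x)$.

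Because $\sfW_\mathrm{ex}$ is block diagonal, the $(0,j)$ entries of $\sfJ_\mathrm{ex}$ never enter the trace. The $(0,0)$ entry gives $\sum_x p(x)^2/p(x)=\sum_x p(x)=1$. The lower block gives $\sum_x m(x)^\intercal\sfW m(x)/p(x)$. Hence
\begin{equation*}
\sfTr{\sfW_\mathrm{ex}\sfJ_\mathrm{ex}[\Pi]}-1=\sum_x m(x)^\intercal\sfW m(x)/p(x).
\end{equation*}
Inserting this into the result of Step 2 and minimizing over $\Pi$ yields \eqref{eq:alt_CMI}.

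\textbf{Expected obstacle.} The only delicate points are, first, the zero-probability outcomes handled in Step 2, and second, whether the minimum over POVMs is attained rather than being only an infimum. For the latter I would invoke compactness: it suffices to restrict to POVMs with at most $d^2$ outcomes, since merging proportional effects does not decrease $\sfJ_\mathrm{ex}$ because of the convexity of $(a,b)\mapsto a^2/b$. The set of such POVMs is compact. If this continuity argument is not clean, one can read $\min$ as $\inf$, and the identity still holds verbatim.
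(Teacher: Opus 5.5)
Your proof is correct and follows the paper's route. The paper only sketches the argument: for fixed $\Pi$, minimizing over estimators gives the EAP estimator, and the residual term is rewritten through the SLD-type equation $\langle L_j,\Pi_x\rangle_{S_B}=\mathrm{Tr}\{D_{\mathrm{B},j}\Pi_x\}$. Your Steps 1--3 carry this out in full, including the zero-probability outcomes that the paper leaves implicit.

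The attainment aside is not part of the paper's argument, and reading $\min$ as $\inf$ suffices, as you say. As written, though, it is incomplete. Merging proportional effects does not by itself cap the number of outcomes at $d^2$. You would first refine into rank-one effects, which cannot decrease the objective by subadditivity. Then you would pass to a vertex of the polytope of admissible weights, on which the objective is linear.
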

\begin{remark} 
Unlike the case of point estimation, $\InnerB{L_0}{L_j}\neq0$ for $j\geq1$ in general. This is the reason to extend $n\times n$ matrices in the parameter space to $(n+1)\times(n+1)$ matrices. By construction, $\sfJ_\mathrm{ex}[\Pi]$ is real symmetric. By definition, we have $\sfJ_{\mathrm{ex},00}[\Pi]=1$ and $\sfJ_{\mathrm{ex},0j}[\Pi]=\bar{\theta}_{j}$ ($j=1,2,\ldots,n$)
\end{remark}
With this extension, it is natural to introduce the extended B-SLD-type FIM:
\begin{equation}
\sfK_{\mathrm{ex}}:=\left[\InnerB{L_j}{L_k}\right]_{j,k=0}^n, 
\end{equation}
which is $(n+1)\times(n+1)$ real symmetric. By definition, we have $\sfK_{\mathrm{ex},00}=1$ and $\sfK_{\mathrm{ex},0j}=\bar{\theta}_{j}$ ($j=1,2,\ldots,n$).
The next step is to normalize $\{L_j\}_{j=0}^{n}$ while keeping $L_0=I_d$.
To address it, we take the block QR-decomposition $\sqrt{\sfK_\mathrm{ex}}=UR_\mathrm{ex}$ with
\begin{equation}
 R_\mathrm{ex}=\begin{pmatrix}
    1&\bra{\overline \theta}\\
    0& R
  \end{pmatrix}, U=\sqrt{\sfK_\mathrm{ex}}R_\mathrm{ex}^{-1},
\end{equation}
where $\ket{\overline \theta}:=(\overline{\theta_1},\ldots,\overline{\theta_n})^\top$ and $R=\sqrt{\sfK-\ket{\overline \theta}\bra{\overline\theta}}$.
Note that $R_\mathrm{ex}^\top R_\mathrm{ex}=\sfK_{\mathrm{ex}}$, and $R_\mathrm{ex}^{-1}$ is given as 
\begin{equation}
R_\mathrm{ex}^{-1}=\begin{pmatrix}
  1& -\bra{\overline{\theta}}R^{-1}\\
   0&R^{-1}
\end{pmatrix}.
\end{equation}
Finally, we define $\hat{L}_j=\sum_{k=0}^n L_k(R_\mathrm{ex}^{-1})_{kj}$ that satisfies the desired property as follows.   
\begin{lemma}
  $\{\hat L_j\}_{j=0}^n$ forms an orthonormal system with respect to the inner product $\InnerB{\cdot}{\cdot}$.  
\end{lemma}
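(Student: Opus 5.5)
The plan is a direct Gram-matrix computation. Write $\hat L_j=\sum_{k=0}^n L_k(R_\mathrm{ex}^{-1})_{kj}$. Since the coefficients $(R_\mathrm{ex}^{-1})_{kj}$ are real and each $L_k$ is Hermitian, every $\hat L_j$ is Hermitian, and $\InnerB{\cdot}{\cdot}$ is real-bilinear on Hermitian matrices. Bilinearity then gives
\begin{equation*}
\InnerB{\hat L_j}{\hat L_k}=\sum_{l,m=0}^n (R_\mathrm{ex}^{-1})_{lj}\InnerB{L_l}{L_m}(R_\mathrm{ex}^{-1})_{mk}
=\bigl[(R_\mathrm{ex}^{-1})^\top \sfK_\mathrm{ex} R_\mathrm{ex}^{-1}\bigr]_{jk}.
\end{equation*}
Using $\sfK_\mathrm{ex}=R_\mathrm{ex}^\top R_\mathrm{ex}$, the right-hand side equals $I_{n+1}$, which is exactly orthonormality.

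The only real content is checking $R_\mathrm{ex}^\top R_\mathrm{ex}=\sfK_\mathrm{ex}$, which the paper states without proof. I will do the block multiplication explicitly.
\begin{itemize}
\item The $(0,0)$ block is $1$, which matches $\InnerB{I_d}{I_d}=\Tr{\rhoB}=1$.
\item The $(0,j)$ block is $\bra{\overline\theta}$, which matches $\InnerB{I_d}{L_j}=\Tr{\rhoB L_j}=\Tr{\MBj}=\overline{\theta_j}$. Here $\Tr{\MBj}=\overline{\theta_j}$ follows by taking the trace of the SLD-type equation.
\item The lower-right block is $\ket{\overline\theta}\bra{\overline\theta}+R^\top R=\ket{\overline\theta}\bra{\overline\theta}+R^2$, because $R$ is a symmetric square root. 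This equals $\sfK$ by the definition of $R$.
\end{itemize}
I will also verify the stated formula for $R_\mathrm{ex}^{-1}$ by direct multiplication. As a consistency check, $\hat L_0=L_0=I_d$, so the normalization keeps $L_0$ as intended.

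The main obstacle is that $R=\sqrt{\sfK-\ket{\overline\theta}\bra{\overline\theta}}$ must exist and be invertible. Existence requires $\sfK-\ket{\overline\theta}\bra{\overline\theta}\ge 0$. This is the Schur complement of the $(0,0)$ entry of the Gram matrix $\sfK_\mathrm{ex}$, which is positive semidefinite because it is a Gram matrix; since that entry equals $1>0$, the Schur complement is positive semidefinite. Equivalently, it is the Gram matrix of the centered operators $L_j-\overline{\theta_j}I_d$.

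Invertibility requires strict positivity. This needs $\rhoB>0$, so that the inner product is proper, together with linear independence of $\{I_d,L_1,\ldots,L_n\}$. I will state this as the standing nondegeneracy assumption implicit in the paper's use of $R^{-1}$. Under it the Schur complement is positive definite and the computation above goes through.
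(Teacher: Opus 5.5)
Your proposal is correct and follows the paper's own argument: the Gram matrix of the $\hat L_j$ equals $(R_\mathrm{ex}^{-1})^\top\sfK_\mathrm{ex}R_\mathrm{ex}^{-1}$, and this is $I_{n+1}$ because $R_\mathrm{ex}^\top R_\mathrm{ex}=\sfK_\mathrm{ex}$. Beyond the paper, you check that factorization block by block (using $\Tr{\rhoB}=1$ and $\Tr{\MBj}=\overline{\theta_j}$), and you identify $\sfK-\ket{\overline\theta}\bra{\overline\theta}$ as the Gram matrix of the centered $L_j-\overline{\theta_j}I_d$; its positive definiteness is the nondegeneracy assumption the paper uses silently when it writes $R^{-1}$.
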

\begin{proof}
\begin{equation}
\begin{split}
    [\InnerB{\hat{L}_j}{\hat{L}_k}]_{j,k=0}^n
    =(R_\mathrm{ex}^{-1})^\top\sfK_{\mathrm{ex}}R_\mathrm{ex}^{-1}=I_{n+1},
\end{split}
\end{equation}
which implies that the operators $\hat{L}_j$ are orthonormal with respect to $\InnerB{\cdot}{\cdot}$. 
\end{proof}
With these notations, define
\begin{equation}
\begin{split}
  \hat \sfJ_\mathrm{ex}[\Pi]&:=\sum_x\!\left[\frac{\InnerB{ \hat L_j}{ \Pi_x}\InnerB{ \Pi_x}{\hat L_k}}{\InnerB{ I_d}{ \Pi_x}}\right]_{j,k=0}^n\\
  &=(R_\mathrm{ex}^{-1})^{\top}\sfJ_\mathrm{ex}[\Pi] R_\mathrm{ex}^{-1}.
\end{split}
\end{equation}
We observe that:
\begin{equation}
\begin{split}\label{Eq:structure}
    \hat \sfJ_\mathrm{ex}[\Pi]_{0j}=\hat \sfJ_\mathrm{ex}[\Pi]_{j0}=\InnerB{ \hat L_0}{ \hat L_j}=\delta_{0j}.
\end{split}
\end{equation}
The following equality holds for any  weight matrix $\sfW_{\mathrm{ex}}$:
\begin{equation}\label{eq:extendedfisher}
\sfTr{
    \sfW_{\mathrm{ex}}\sfJ_\mathrm{ex}[\Pi]}
    =\sfTr{ R_\mathrm{ex}\sfW_{\mathrm{ex}}
    R_\mathrm{ex}^\top \hat \sfJ_\mathrm{ex}[\Pi]}.
\end{equation}

Next, we show the GM inequality holds for $\hat\sfJ_{\mathrm{ex}}[\Pi]$:
\begin{lemma}[Bayesian Gill-Massar inequality]\label{lemma:GMIneq}
For any POVM, the trace of $\hat\sfJ_{\mathrm{ex}}[\Pi]$ is upper bounded as
  \begin{equation}
    \sfTr{\hat{\sfJ}_\mathrm{ex}[\Pi]}
    \leq d
    =\dim(\mathcal H).
  \end{equation}
\end{lemma}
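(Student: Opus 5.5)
The plan is to reduce the trace to a sum, over outcomes, of squared norms of orthogonal projections, and then bound each term by a dimension count. Write $\langle\cdot,\cdot\rangle$ for $\InnerB{\cdot}{\cdot}$. Throughout we assume $\rhoB>0$, so that this is a genuine real inner product on the real vector space of $d\times d$ Hermitian matrices. By definition,
\begin{equation*}
\sfTr{\hat\sfJ_\mathrm{ex}[\Pi]}=\sum_x \frac{\sum_{j=0}^n \langle \hat L_j,\Pi_x\rangle^2}{\langle I_d,\Pi_x\rangle}.
\end{equation*}
By the preceding lemma, $\{\hat L_j\}_{j=0}^n$ is an orthonormal system. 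Bessel's inequality then gives $\sum_{j}\langle\hat L_j,\Pi_x\rangle^2\le \|P\Pi_x\|^2$, where $P$ is the orthogonal projection onto $\mathrm{span}\{\hat L_j\}$.

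Enlarging this span to the full space of Hermitian matrices only increases the bound. Choose any orthonormal basis $\{G_a\}_{a=0}^{d^2-1}$ of the Hermitian matrices (with respect to $\langle\cdot,\cdot\rangle$) with $G_0=I_d$; this is possible because $\hat L_0=L_0=I_d$ has unit norm, since $\Tr{\rhoB}=1$. Then $\sum_j\langle\hat L_j,\Pi_x\rangle^2\le \langle I_d,\Pi_x\rangle^2+\sum_{a\ge1}\langle G_a,\Pi_x\rangle^2$. It therefore suffices to prove the statement in the maximal case, where we need
\begin{equation*}
\sum_x \frac{\langle \Pi_x,\Pi_x\rangle}{\langle I_d,\Pi_x\rangle}\le d,
\end{equation*}
using $\sum_a\langle G_a,\Pi_x\rangle^2=\langle\Pi_x,\Pi_x\rangle=\Tr{\rhoB\Pi_x^2}$ and $\langle I_d,\Pi_x\rangle=\Tr{\rhoB\Pi_x}$. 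Outcomes with $\Tr{\rhoB\Pi_x}=0$ have $\Pi_x=0$ because $\rhoB>0$, so they can be discarded.

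The key step, and the only nontrivial one, is the per-outcome inequality $\Tr{\rhoB\Pi_x^2}\le \Tr{\rhoB\Pi_x}\,\Tr{\Pi_x}$. One route is to write $\Pi_x=\sum_i\lambda_i\ket{e_i}\bra{e_i}$ with $\lambda_i\ge0$. Then $\Tr{\rhoB\Pi_x^2}=\sum_i\lambda_i^2 r_i$ with $r_i=\bra{e_i}\rhoB\ket{e_i}\ge0$, and $\sum_i\lambda_i^2r_i\le(\max_i\lambda_i)\sum_i\lambda_ir_i\le \Tr{\Pi_x}\sum_i\lambda_i r_i$. An alternative route is the operator inequality $\Pi_x^2\le \|\Pi_x\|\,\Pi_x\le\Tr{\Pi_x}\,\Pi_x$, valid for positive $\Pi_x$, traced against $\rhoB$. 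Summing over $x$ gives $\sum_x\Tr{\Pi_x}=\Tr{I_d}=d$, which proves the lemma.

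I expect the main obstacle to be bookkeeping rather than depth. One point is the real-versus-complex structure of the inner product: for Hermitian arguments $\langle A,B\rangle$ is real, so Bessel's inequality must be applied in the real space of Hermitian matrices, where $\Pi_x$ lives. The other is handling the possibly singular $\rhoB$. If $\rhoB$ is not positive definite, I would argue by restricting to its support, or by a continuity or perturbation argument $\rhoB+\epsilon I$. In the singular case the bound $\Tr{\rhoB\Pi_x^2}\le\Tr{\Pi_x}\Tr{\rhoB\Pi_x}$ still holds directly, and terms with zero denominator are taken to be zero.
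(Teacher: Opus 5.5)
Your proof is correct and follows the paper's argument. Bessel's inequality for the orthonormal family $\{\hat L_j\}_{j=0}^n$ bounds each numerator by $\InnerB{\Pi_x}{\Pi_x}=\Tr{\rhoB\Pi_x^2}$, and the per-outcome inequality $\Tr{\rhoB\Pi_x^2}\le\Tr{\rhoB\Pi_x}\Tr{\Pi_x}$ then reduces everything to $\sum_x\Tr{\Pi_x}=d$. Your detour through a completed basis $\{G_a\}$ is unnecessary, since Bessel directly gives the bound $\InnerB{\Pi_x}{\Pi_x}$, but your justification $\Pi_x^2\le\|\Pi_x\|\,\Pi_x\le\Tr{\Pi_x}\,\Pi_x$ usefully supplies a step the paper only asserts.
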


\begin{proof}
  \begin{align}
    \sfTr{\hat{\sfJ}_\mathrm{ex}[\Pi]}
    &=\sum_x
      \frac{
        \sum_{j=0}^n|\InnerB{ \hat L_j}{ \Pi_x}|^2
      }{\InnerB{ I_d}{\Pi_x}}
       \notag\\
    &\leq
      \sum_x
      \frac{\InnerB{\Pi_x}{\Pi_x}}
           {\InnerB{ I_d}{ \Pi_x}}
       \label{eq:GMIneq1}\\
    &\leq
      \sum_x \sfTr{\Pi_x} \label{eq:GMineq2}\\
    &=\sfTr{I_d}=d. \notag
  \end{align}
  Inequality~\eqref{eq:GMIneq1} follows from Bessel's inequality (applied to the orthonormal family $\{\hat L_0,\ldots,\hat L_n\}$ under $\InnerB{\cdot}{\cdot}$), and it is saturated when $\Pi_x\in\mathrm{span}\{\hat L_0,\ldots,\hat L_n\}$.  Inequality~\eqref{eq:GMineq2} follows from
  \begin{equation}
    \begin{split}
    \Tr{\Pi_x}\InnerB{I_d}{\Pi_x}=&
    \sfTr{\rhoB\Pi_x}\sfTr{\Pi_x}\\
    \geq&\sfTr{\rhoB\Pi_x^2}\\
    =&\InnerB{\Pi_x}{\Pi_x},
    \end{split}
  \end{equation}
and it is saturated when the POVM elements $\Pi_x$ are rank-one projectors.
\end{proof}
\subsection{Bayesian Gill-Massar Bound}
Define the whole set of $n+1$ dimensional positive semi-definite matrix as $\mathcal{L}_+(\mathbb{R}^{n+1})$, 
Lemma~\ref{lemma:GMIneq} implies that for all $\Pi$,
\begin{equation}\label{eq:Gset}
  \hat{\sfJ}_\mathrm{ex}[\Pi]\in\mathcal{G} :=\{G\in\mathcal{L}_+(\mathbb{R}^{n+1})| G_{0j}=\delta_{0j},\ \sfTr{G}\leq d\}. 
\end{equation}
This then yields the claimed lower bound based on the same idea of Gill and Massar (Theorem \ref{thm:BGM}). 

Before proceeding it, we note that the set of all $\hat{\sfJ}_\mathrm{ex}[\Pi]$ coincide with the set $\mathcal{G}$ when $d=2$. This is also exactly same situation as in point estimation (Lemma~\ref{lemma:GM2dim}). 

\begin{lemma}\label{lemma:GM2dim}
For all POVMs the following two sets satisfy the inclusion relation $\mathcal{J}\subset \mathcal{G}$, where
  \begin{align*}
\mathcal{J}&:=\{\hat{\sfJ}_\mathrm{ex}[\Pi]| \Pi:\mathrm{POVM}\},\\
   \mathcal{G}&:=\{G\in\mathcal{L}_+(\mathbb{R}^{n+1})| G_{0j}=\delta_{0j}, \sfTr{G}\leq d\}.
  \end{align*}
Furthermore, two sets are equivalent for a two-dimensional system, i.e., $\mathcal{J}=\mathcal{G}$ holds for any qubit model. 
\end{lemma}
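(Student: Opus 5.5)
The inclusion $\mathcal{J}\subset\mathcal{G}$ needs only what is already proved. For any POVM, $\hat\sfJ_\mathrm{ex}[\Pi]$ is a sum of matrices $v_xv_x^\top/\InnerB{I_d}{\Pi_x}$ with $v_x=(\InnerB{\hat L_j}{\Pi_x})_j$, so it is real symmetric and positive semidefinite. Eq.~\eqref{Eq:structure} gives $\hat\sfJ_\mathrm{ex}[\Pi]_{0j}=\delta_{0j}$, and Lemma~\ref{lemma:GMIneq} gives $\sfTr{\hat\sfJ_\mathrm{ex}[\Pi]}\le d$. Hence $\hat\sfJ_\mathrm{ex}[\Pi]\in\mathcal{G}$.

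For the reverse inclusion when $d=2$, I would write every $G\in\mathcal{G}$ as $G=1\oplus G'$ with $G'\ge0$ and $\sfTr{G'}\le 1$. The plan is to combine two structural facts.

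First, $\mathcal{J}$ is convex. Randomizing two POVMs, i.e.\ taking the union $\{\lambda\Pi_x\}\cup\{(1-\lambda)\Pi'_y\}$, gives $\lambda\hat\sfJ_\mathrm{ex}[\Pi]+(1-\lambda)\hat\sfJ_\mathrm{ex}[\Pi']$, because each summand is homogeneous of degree one in $\Pi_x$.

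Second, the trivial POVM $\{I_d\}$ gives $G'=0$, since $\InnerB{\hat L_j}{I_d}=\delta_{0j}$. Every $G'$ with $\sfTr{G'}\le1$ is a convex combination of $0$ and matrices $uu^\top$ with $|u|=1$, via the spectral decomposition. It therefore suffices to realize each rank-one $1\oplus uu^\top$ with $u\in\mathbb{R}^n$ a unit vector.

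To realize $uu^\top$, I would use a two-outcome projective measurement $\{P,I_d-P\}$ with $P$ a rank-one projector. Set $a=(\InnerB{\hat L_j}{P})_{j=1}^n$ and $w=\InnerB{I_d}{P}\in(0,1)$. Since $\InnerB{\hat L_j}{I_d}=0$ for $j\ge1$, the complementary outcome has vector $-a$ and weight $1-w$. Hence
\[
G'=\Big(\tfrac1w+\tfrac1{1-w}\Big)aa^\top ,
\]
which is rank one and proportional to $aa^\top$. Because both inequalities in Lemma~\ref{lemma:GMIneq} are saturated when the elements are rank-one projectors and $n=3$ (the orthonormal system then spans all $2\times2$ Hermitian matrices), the trace is exactly $1$ in that case. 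For $n<3$, I would first extend $\{\hat L_j\}$ to an orthonormal basis $\hat L_0,\dots,\hat L_3$ of the four-dimensional real space of Hermitian matrices. I would then realize the padded unit vector $(u,0)\in\mathbb{R}^3$ in the full model and read off the $n\times n$ block, which is then $uu^\top$ exactly.

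The main obstacle is showing that the direction of $a$ can be made arbitrary. Writing $P=(I+r\cdot\sigma)/2$ with $|r|=1$, the vector $a$ is affine in $r$, namely $a=c+Br$. The conditions $\InnerB{\hat L_j}{P}=0$ for the two basis directions orthogonal to $(u,0)$ are two affine equations in $r$. I would argue that they cut out a line through the Bloch ball meeting the sphere, and that the map $B$ is invertible. Invertibility should follow because $X\mapsto\frac12(\rhoB X+X\rhoB)$ is invertible and $\{\hat L_j\}_{j=0}^3$ is a basis. That the line meets the sphere should follow from the symmetry $P\mapsto I_d-P$, which sends $a\mapsto-a$ and so makes the image set centrally symmetric and enclosing the origin. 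This is the step I would check most carefully, possibly by an explicit computation after diagonalizing $\rhoB$.
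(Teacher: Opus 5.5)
Your proof is correct, but it reaches the rank-one case by a longer route than the paper. For a unit vector $v$, the paper takes the PVM given directly by the spectral decomposition of $L_v=\sum_i v_i\hat L_i$. The normalized elements $\tilde\Pi_x^{(v)}$ are $\InnerB{\cdot}{\cdot}$-orthonormal and span $L_v$, so Bessel's inequality is an equality in the single direction $v$. This gives $\bra{v}\hat{\sfJ}_\mathrm{ex}\ket{v}=1$, and the trace bound then pins down $\hat{\sfJ}_\mathrm{ex}=\ket{0}\bra{0}+\ket{v}\bra{v}$, without completing $\{\hat L_j\}$ to a basis. You instead compute $G'=(1/w+1/(1-w))aa^\top$ for an arbitrary two-outcome PVM, get the trace exactly from Parseval after padding to $\hat L_0,\dots,\hat L_3$, and then solve for the Bloch vector.

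The step you flagged is easier than you feared. Because $\InnerB{\hat L_j}{I_2}=\delta_{0j}$, the offset $c$ vanishes, which is what your $P\mapsto I_2-P$ symmetry really shows, so $a=Br$ is linear. $B$ is injective because a nonzero traceless $r\cdot\sigma$ orthogonal to $\hat L_1,\hat L_2,\hat L_3$ would be a multiple of $\hat L_0=I_2$. That, rather than invertibility of $X\mapsto\frac12(\rhoB X+X\rhoB)$ alone, is the right justification. Taking $r=B^{-1}u/|B^{-1}u|$ finishes the argument.

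The resulting $P$ has no component along the two basis directions orthogonal to $(u,0)$. Hence $P\in\mathrm{span}\{I_2,L_u\}$ with $L_u=\sum_i u_i\hat L_i$, i.e.\ $P$ is a spectral projector of $L_u$, so your existence argument rediscovers exactly the paper's measurement. The paper's route is shorter and hands over the explicit optimal PVM used in the attainability corollary. Yours makes the rank-one structure of $G'$ explicit. It also includes the trivial POVM $\{I_d\}$ in the convex decomposition and so correctly covers $\sfTr{G'}<1$. The paper's proof glosses over that case when it writes $\sum_j c^{(j)}I_2=I_2$ while only assuming $\sum_j c^{(j)}\le1$.
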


\begin{proof}
The inclusion from left to right is immediate from \eqref{eq:Gset}. We prove the reverse inclusion.

First, consider the case where $G\in\mathcal{G}$ is rank-two and trace-two.
For any $\ket{v}=(0,v_1,\ldots,v_n)^T\in\mathbb{R}^{n+1}$ with $|v|=1$, let $\Pi^{(v)}$ be the projection-valued measure (PVM) corresponding to $L_v:=\sum_{i=1}^n v_i \hat{L}_i$. For a rank-one PVM element $\Pi_x^{(v)}$ in a qubit system,
  \[
    \Pi_x^{(v)}=\bigl(\Pi_x^{(v)}\bigr)^2
    \quad\Rightarrow\quad
    \InnerB{I_2}{\Pi_x^{(v)}}=\InnerB{\Pi_x^{(v)}}{\Pi_x^{(v)}}.
  \]
Consequently, 
\begin{equation}
  \begin{split}
    \bra{v}\hat{\sfJ}[\Pi^{(v)}]\ket{v}
    &=\sum_x
      \frac{|\InnerB{ L_v}{\Pi_x^{(v)}}|^2}
           {\InnerB{I_2}{ \Pi_x^{(v)}}} \\
    &=\sum_x
      \frac{|\InnerB{ L_v}{\Pi_x^{(v)}}|^2}
           {\InnerB{\Pi_x^{(v)}}{\Pi_x^{(v)}}}\\
    &=\sum_x
      |\InnerB{ L_v}{ \tilde{\Pi}_x^{(v)}}|^2
    =\InnerB{ L_v}{L_v}
    =1,
  \end{split}
\end{equation}
where $\tilde{\Pi}_x^{(v)} :=\Pi_x^{(v)}/\sqrt{\InnerB{ \Pi_x^{(v)}}{\Pi_x^{(v)}}}$ are orthonormal with respect to $\InnerB{\cdot}{\cdot}$.
The last equality follows from the saturation condition of Bessel's inequality, since $L_v\in \mathrm{span}\{\tilde{\Pi}_x^{(v)}\}$.
Therefore, 
\begin{equation*}
  \hat{\sfJ}[\Pi^{(v)}]=\ket{v}\bra{v}+\ket{0}\bra{0}=G,
\end{equation*}
where $\ket{0}=(1,0,\dots,0)^\top$.

Next, for a general $G\in\mathcal{G}$, consider its spectral decomposition
\[
    G=\sum_{j=1}^{m\leq n} c^{(j)}
    \ket{v^{(j)}}\bra{v^{(j)}} +\ket{0}\bra{0},
    \qquad
    \sum_{j=1}^{m} c^{(j)} \leq 1,
  \]
 Define a set of POVMs through the corresponding eigenvector $\ket{v^{(j)}}\bra{v^{(j)}}$: 
  \[
    \Pi^{(j)}=\{\Pi^{(j)}_k\}_{k},
    \qquad
    \sum_k\Pi^{(j)}_k=I_2 \quad \text{for each j},
  \]
where $\Pi^{(j)}=\{\Pi^{(j)}_k\}_k$ is the PVM associated with $\sum_i v_i^{(j)}\hat L_i$. By construction, we have
  \begin{equation}
    \sum_{j,k} c^{(j)}\Pi_{k}^{(j)}=\sum_j c^{(j)} I_2 =I_2.
  \end{equation}
Hence, we can define a randomized POVM defined by randomization of PVMs as $\Pi=\{c^{(j)}\Pi_k^{(j)}\}$, and the following equality holds for $\hat \sfJ_\mathrm{ex}[\Pi]$:
  \begin{equation*}
    \hat\sfJ_\mathrm{ex}[\Pi]=\sum_j c^{(j)}\hat\sfJ_\mathrm{ex}[\Pi^{(j)}]=\sum_j c^{(j)}\ket{v^{(j)}}\bra{v^{(j)}}+\ket{0}\bra{0}=G.
   \end{equation*}
Therefore, the two sets are equivalent.
\end{proof}

Finally, by applying Lemma~\ref{lemma:GMIneq}, we obtain a Bayesian version of the GM bound, which is the main result of this paper.

\begin{theorem}[Bayesian Gill-Massar bound] \label{thm:BGM}
For any  weight $\sfW$ and the corresponding  $\sfK$, the Bayes risk is lower bounded as 
  \begin{equation}
\begin{split}
      \cBayes
      \geq&
      \cBGM,\\
      \cBGM:=&
      \sfTr{\sfW
      (\muTwo-\ket{\overline{\theta}}\bra{\overline{\theta}})}\\
       &-(d-1)\left\|
        \sqrt{\sfK-\ket{\overline{\theta}}\bra{\overline{\theta}}}\sfW\sqrt{\sfK-\ket{\overline{\theta}}\bra{\overline{\theta}}}
      \right\|_\infty, 
\end{split}
  \end{equation}
where $\|\cdot\|_\infty$ denotes the maximum eigenvalue.
\end{theorem}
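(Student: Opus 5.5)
The plan is to start from the alternative expression \eqref{eq:alt_CMI} of Lemma~1, rewrite the trace term in the normalized coordinates via \eqref{eq:extendedfisher}, and then relax the minimization over POVMs to a minimization over the convex set $\mathcal{G}$ of Lemma~\ref{lemma:GMIneq}/\eqref{eq:Gset}. Concretely,
\begin{equation*}
\cMI=\sfTr{\sfW\muTwo}+1-\max_{\Pi}\sfTr{R_\mathrm{ex}\sfW_\mathrm{ex}R_\mathrm{ex}^\top\hat\sfJ_\mathrm{ex}[\Pi]}
\geq \sfTr{\sfW\muTwo}+1-\max_{G\in\mathcal{G}}\sfTr{R_\mathrm{ex}\sfW_\mathrm{ex}R_\mathrm{ex}^\top G},
\end{equation*}
since $\hat\sfJ_\mathrm{ex}[\Pi]\in\mathcal{G}$ for every POVM.

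Next I will compute the block structure of $A:=R_\mathrm{ex}\sfW_\mathrm{ex}R_\mathrm{ex}^\top$. Using the given $R_\mathrm{ex}$, one finds
\begin{equation*}
A=\begin{pmatrix}1+\bra{\overline\theta}\sfW\ket{\overline\theta} & \bra{\overline\theta}\sfW R^\top\\ R\sfW\ket{\overline\theta} & R\sfW R^\top\end{pmatrix}.
\end{equation*}
Every $G\in\mathcal{G}$ has the form $G=\ket{0}\bra{0}\oplus G'$ with $G'\geq0$ an $n\times n$ matrix and $\sfTr{G'}\leq d-1$. The off-diagonal blocks of $A$ therefore drop out, and
\begin{equation*}
\sfTr{AG}=1+\bra{\overline\theta}\sfW\ket{\overline\theta}+\sfTr{R\sfW R^\top G'}.
\end{equation*}
Because $R\sfW R^\top\geq0$, maximizing a linear functional over $\{G'\geq0,\ \sfTr{G'}\leq d-1\}$ gives $(d-1)\|R\sfW R^\top\|_\infty$. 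The maximum is attained at $G'=(d-1)\ket{u}\bra{u}$, with $u$ a top eigenvector; this is the standard fact that the extreme points are scaled rank-one projectors.

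Substituting back, the constant $1$ cancels, and $\bra{\overline\theta}\sfW\ket{\overline\theta}=\sfTr{\sfW\ket{\overline\theta}\bra{\overline\theta}}$ gives the first term $\sfTr{\sfW(\muTwo-\ket{\overline\theta}\bra{\overline\theta})}$. Since $R=\sqrt{\sfK-\ket{\overline\theta}\bra{\overline\theta}}$ is symmetric, $R^\top=R$, which produces exactly the stated operator-norm term.

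The only step requiring care is the block computation of $A$. One has to confirm the convention $R_\mathrm{ex}^\top R_\mathrm{ex}=\sfK_\mathrm{ex}$, so that \eqref{eq:extendedfisher} carries $R_\mathrm{ex}$ on the left and $R_\mathrm{ex}^\top$ on the right. One also has to check that the $(0,0)$ entry is $1+\bra{\overline\theta}\sfW\ket{\overline\theta}$ and not something else. I expect this bookkeeping to be the main obstacle; the rest is linear optimization over a spectraplex. I also need $\sfK-\ket{\overline\theta}\bra{\overline\theta}\geq0$ so that $R$ is well defined. This holds because it is the Schur complement of the PSD Gram matrix $\sfK_\mathrm{ex}$ with respect to its $(0,0)$ entry, which equals $1$.
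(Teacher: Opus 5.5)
Your proposal is correct and follows the paper's proof step for step. You relax $\hat\sfJ_\mathrm{ex}[\Pi]$ to $\mathcal{G}$, use $G=\ket{0}\bra{0}\oplus G'$ to discard the off-diagonal blocks of $R_\mathrm{ex}\sfW_\mathrm{ex}R_\mathrm{ex}^\top$, and maximize $\sfTr{R\sfW R\,G'}$ over $G'\geq0$, $\sfTr{G'}\leq d-1$, which is the paper's reduction. Your block computation and the bound $\sfTr{AG'}\leq\|A\|_\infty\sfTr{G'}$ make explicit what the paper calls ``rewriting the optimization explicitly'' and handles by SDP duality in Lemma~\ref{lemma:maximumAX}. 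One small point: you need $\sfK-\ket{\overline\theta}\bra{\overline\theta}$ to be positive definite, not merely semidefinite, so that $R_\mathrm{ex}^{-1}$ and hence $\hat L_j$ exist; the paper assumes this tacitly too.
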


\begin{proof}
Since the first term in \eqref{eq:alt_CMI} is constant, we minimize the rest. 
  \begin{align*}
    &\min_{\Pi}-  \sfTr{\sfW_\mathrm{ex} \sfJ_\mathrm{ex}[\Pi]}+1\\
    &=  -\max_{\Pi}\sfTr{ R_\mathrm{ex}\sfW_{\mathrm{ex}}
    R_\mathrm{ex}^\top \hat \sfJ_\mathrm{ex}[\Pi]}+1\\
    &\geq -\max_{G\in\mathcal{G}}\sfTr{ R_\mathrm{ex}\sfW_{\mathrm{ex}}    R_\mathrm{ex}^\top G}+1\\
    &=-\sfTr{\sfW\ket{\overline \theta}\bra{\overline \theta}}\\
    &\quad-\max_{X\in\cL_+(\mathbb{R}^n),\Tr{X}\leq d-1}\Tr{X\sqrt{\sfK-\ket{\overline \theta}\bra{\overline \theta}} \sfW \sqrt{\sfK-\ket{\overline \theta}\bra{\overline \theta}}}\\
    &=-\sfTr{\sfW\ket{\overline{\theta}}\bra{\overline{\theta}}}-(d-1)\left\|   \sqrt{\sfK-\ket{\overline{\theta}}\bra{\overline{\theta}}}\sfW\sqrt{\sfK-\ket{\overline{\theta}}\bra{\overline{\theta}}}
      \right\|_\infty .
  \end{align*}
The first inequality is due to Lemma \ref{lemma:GM2dim}. The equality below follows by rewriting the optimization explicitly. The final equality is proven by solving the SDP problem, which is given by Lemma~\ref{lemma:maximumAX}.
\end{proof}
\begin{lemma}\label{lemma:maximumAX}
Given a positive semidefinite matrix $A\in \mathcal{L}_+(\mathbb{R}^n)$,
  \begin{equation}
    \max_X \{\sfTr{AX}\mid X\in \mathcal{L}_+(\mathbb{R}^n),\ \sfTr{X}\leq d\}
    =d\|A\|_\infty.
  \end{equation}
\end{lemma}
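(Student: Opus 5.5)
The natural route is to prove matching upper and lower bounds on the maximum, both resting on the spectral decomposition of $A$. Since $A\in\mathcal{L}_+(\mathbb{R}^n)$ is real symmetric, write $A=\sum_{i=1}^n a_i\ket{u_i}\bra{u_i}$ with an orthonormal eigenbasis $\{\ket{u_i}\}$ and eigenvalues $a_1\geq a_2\geq\cdots\geq a_n\geq0$, so that $\|A\|_\infty=a_1$. We assume $d\geq0$, which is the case of interest since the lemma is applied with $d-1\geq1$ in the proof of Theorem~\ref{thm:BGM}.

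For the upper bound, take any feasible $X$, i.e.\ $X\geq0$ with $\sfTr{X}\leq d$. Then
\begin{equation*}
\sfTr{AX}=\sum_i a_i\bra{u_i}X\ket{u_i}.
\end{equation*}
Each diagonal entry $\bra{u_i}X\ket{u_i}$ is nonnegative because $X\geq0$, and these entries sum to $\sfTr{X}\leq d$. Therefore
\begin{equation*}
\sfTr{AX}\leq a_1\sum_i\bra{u_i}X\ket{u_i}=a_1\sfTr{X}\leq d\,a_1.
\end{equation*}
Equivalently, one can use the operator inequality $A\leq\|A\|_\infty I_n$ together with $X\geq0$ to get $\sfTr{AX}\leq\|A\|_\infty\sfTr{X}$. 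In the last step we use $a_1\geq0$, which holds because $A$ is positive semidefinite.

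For the lower bound, i.e.\ attainability, choose $X^\ast=d\ket{u_1}\bra{u_1}$. This matrix is positive semidefinite, has trace exactly $d$, and satisfies $\sfTr{AX^\ast}=d\,a_1$. So the supremum is attained and equals $d\|A\|_\infty$, and writing $\max$ is justified.

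No step here is a real obstacle. The only care needed is the sign of $a_1$ in the upper bound, which is exactly where positive semidefiniteness of $A$ enters. If $A$ were indefinite with negative top eigenvalue, the maximum would instead be $0$, attained at $X=0$. In the application, $A=\sqrt{\sfK-\ket{\overline\theta}\bra{\overline\theta}}\,\sfW\,\sqrt{\sfK-\ket{\overline\theta}\bra{\overline\theta}}$ is a congruence of the positive definite $\sfW$, hence positive semidefinite, so the lemma applies. The maximizer $X^\ast$ also identifies the optimal $G$ in Theorem~\ref{thm:BGM}: it is rank-two, namely $\ket{0}\bra{0}$ plus $(d-1)$ times the top eigenprojector. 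By Lemma~\ref{lemma:GM2dim} this $G$ is realized for $d=2$ by the PVM along a single direction.
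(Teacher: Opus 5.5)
Your proof is correct and rests on the same certificate as the paper's, but it gets there without the Lagrangian machinery. The paper passes to the ``equivalent'' problem $\min_X -\sfTr{AX}$ subject to $X\geq 0$, $\sfTr{X}=d$. It then forms the Lagrange dual and reduces it to $\max_\nu\{-\nu d \mid \nu I-A\geq 0\}$, reads off the dual optimum $\nu=\|A\|_\infty$, and points to a primal optimum supported on the top eigenspace. Your operator inequality $A\leq\|A\|_\infty I_n$ is exactly dual feasibility of $\nu=\|A\|_\infty$, and your $X^\ast=d\ket{u_1}\bra{u_1}$ is the primal solution the paper alludes to. In effect you have written out weak duality by hand.

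Your version buys self-containment and clarity about hypotheses. You handle the constraint $\sfTr{X}\leq d$ directly and show that positivity of $A$ enters only through $a_1\geq 0$. The paper replaces $\sfTr{X}\leq d$ by $\sfTr{X}=d$ without comment, and that replacement is valid only because $A\geq 0$. The paper's framing buys a systematic SDP viewpoint, which would still apply to feasible sets with no obvious extremal point.

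One small correction to your side remark: a symmetric matrix whose top eigenvalue is negative is negative definite, not indefinite. In general the maximum is $d\max\{\lambda_{\max}(A),0\}$. So, reading $\|A\|_\infty$ as the largest eigenvalue as the paper does, your argument actually proves the identity whenever $\lambda_{\max}(A)\geq 0$.
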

\begin{proof}
Consider the equivalent problem $\min_X -\sfTr{AX}$ subject to $X\geq 0$ and $\sfTr{X}=d$. Introduce Lagrange multipliers $\Lambda\succeq 0$ and $\nu\in\mathbb{R}$ and define
  \begin{equation}
  \begin{split}
      g(\Lambda,\nu)
      =&\inf_X\bigl(-\sfTr{AX}-\sfTr{X\Lambda}+\nu(\sfTr{X}-d)\bigr)\\
      =&\begin{cases*}
        -\nu d & if $\Lambda=\nu X-A$\\
        -\infty & \text{Otherwise}
      \end{cases*}. 
  \end{split}
  \end{equation}
The dual problem becomes
  \[
    \max_{\nu}\{-\nu d\mid \nu I-A\geq 0\},
  \]
which can be solved straightforward by diagonalizing $A$. The optimum is attained at $\nu=\|A\|_\infty$, and an optimal primal solution $X^\ast$ is supported on the eigenspace corresponding to $\|A\|_\infty$. Reversing the sign yields $\max_X\sfTr{AX}=d\|A\|_\infty$.
\end{proof}

\subsection{Attainability}
We now specialize the B-GM bound to the qubit case. Lemma~\ref{lemma:GM2dim} follows that, the B-GM bound is attainable whenever the optimal POVM saturates the maximal eigenvector. For qubits, this condition is always satisfied by a PVM. As a consequence, the B-GM bound and the optimal Bayesian risk coincide, yielding a closed-form solution.
\begin{corollary}[Attainability in qubit model]\label{cor:QubitCloseForm}
  For any  weight matrix $\sfW$, the optimal Bayes
  risk in the qubit case admits the closed-form identity
  \begin{equation}
    \cBayes=\cBGM .
  \end{equation}
Let $\sfK$ be the B-SLD-type FIM and $\{\hat L_i\}$ the corresponding orthonormalized operators. Then the optimal estimator is the EAP and, 
the optimal PVM is given by the spectral decomposition of the operator $L_v=\sum_{i=1}^{n} v_i \hat L_i$, where $\ket{v}$ is the eigenvector associated with the largest eigenvalue of
  \[
    \sqrt{\sfK-\ket{\overline \theta}\bra{\overline \theta}}\,
    \sfW\,
    \sqrt{\sfK-\ket{\overline \theta}\bra{\overline \theta}} .
  \]
\end{corollary}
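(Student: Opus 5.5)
The plan is to turn the single inequality in the proof of Theorem~\ref{thm:BGM} into an equality when $d=2$, and then read off the optimizer. In that proof, the only non-identity step is replacing $\max_{\Pi}$ by $\max_{G\in\mathcal{G}}$. By Lemma~\ref{lemma:GM2dim}, $\mathcal{J}=\mathcal{G}$ for any qubit model, so the two maxima coincide. Every other step is an equality: the reformulation \eqref{eq:alt_CMI}, the congruence \eqref{eq:extendedfisher}, and Lemma~\ref{lemma:maximumAX} with $d$ replaced by $d-1=1$. Hence $\cBayes=\cBGM$.

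The substantive content is to identify the optimal decision explicitly, which I do by tracking the maximizer through the chain.

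\textbf{Step 1: reduce to the block $X$.} Write $G\in\mathcal{G}$ as $G=\ket{0}\bra{0}\oplus X$ with $X\geq 0$ and $\Tr{X}\leq 1$. Then
\[
R_\mathrm{ex}\sfW_\mathrm{ex}R_\mathrm{ex}^\top=\begin{pmatrix}1+\bra{\overline\theta}\sfW\ket{\overline\theta}&\ast\\ \ast& R\sfW R\end{pmatrix},
\]
with $R=\sqrt{\sfK-\ket{\overline\theta}\bra{\overline\theta}}$. The off-diagonal blocks $\ast$ do not contribute to the trace against a block-diagonal $G$. The objective therefore becomes $1+\bra{\overline\theta}\sfW\ket{\overline\theta}+\Tr{XR\sfW R}$.

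\textbf{Step 2: optimal $X$.} By Lemma~\ref{lemma:maximumAX}, the maximum is attained at $X^\ast=\ket{v}\bra{v}$, where $\ket{v}$ is a unit eigenvector of $R\sfW R$ for the largest eigenvalue. Here $R\sfW R\geq 0$, so the maximal value is $\|R\sfW R\|_\infty$ and it is reached with $\Tr{X^\ast}=1$.

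\textbf{Step 3: realize $X^\ast$ by a PVM.} The corresponding $G^\ast=\ket{0}\bra{0}+\ket{v}\bra{v}$ is exactly the rank-two, trace-two case treated first in the proof of Lemma~\ref{lemma:GM2dim}. There it is shown that the PVM $\Pi^{(v)}$ from the spectral decomposition of $L_v=\sum_i v_i\hat L_i$ satisfies $\hat\sfJ_\mathrm{ex}[\Pi^{(v)}]=G^\ast$. So this PVM attains the maximum over $\Pi$.

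\textbf{Step 4: optimal estimator.} Use the EAP estimator $\htheta_j(x)=\InnerB{L_j}{\Pi_x}/\InnerB{I_d}{\Pi_x}=\Tr{\MBj\Pi_x}/\Tr{\rhoB\Pi_x}$. This is the estimator-optimization step behind Lemma~1's formula \eqref{eq:alt_CMI}, so for the fixed POVM $\Pi^{(v)}$ it achieves the Bayes risk given there.

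\textbf{Main obstacle.} The point needing care is the construction of $\Pi^{(v)}$ in Step~3. If $L_v$ is proportional to the identity it has no nondegenerate spectral decomposition, and one must check that $\hat\sfJ_\mathrm{ex}[\Pi^{(v)}]=G^\ast$ still holds. I would first argue this cannot occur. $\hat L_0=I_2$ and $L_v$ are $\InnerB{\cdot}{\cdot}$-orthonormal, so $L_v\propto I_2$ would give $0=\InnerB{I_2}{L_v}=c\,\InnerB{I_2}{I_2}=c$, contradicting $\|L_v\|=1$. Hence $L_v$ has two distinct eigenvalues and rank-one spectral projectors. This step also needs $\rhoB>0$, so that the inner product is nondegenerate and the projectors have nonzero norm, and it needs $R$ invertible. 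For degenerate priors I would restrict to the support, or treat that case by a limiting argument.
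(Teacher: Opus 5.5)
Your proposal is correct and follows the paper's own argument. For $d=2$, Lemma~\ref{lemma:GM2dim} ($\mathcal{J}=\mathcal{G}$) turns the single inequality in the proof of Theorem~\ref{thm:BGM} into an equality, and the maximizer $\ket{0}\bra{0}+\ket{v}\bra{v}$ given by Lemma~\ref{lemma:maximumAX} is realized by the PVM of $L_v$ together with the EAP estimator. Your additional checks fill in details the paper leaves implicit: the block form of $R_\mathrm{ex}\sfW_\mathrm{ex}R_\mathrm{ex}^\top$, and the fact that $L_v\not\propto I_2$ because $\InnerB{I_2}{L_v}=0$, so its spectral projectors are rank one.
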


An important practical implication of this result is that the optimal measurement is a PVM. In the qubit case, the optimal PVM can be easily implemented by a standard method. This is in sharp contrast with point estimation, where the optimal measurement may require a randomized measurement or genuinely non-projection-valued measure.

\section{Numerical Examples}
We analyze two examples to demonstrate the behavior of three bounds for various dimension $d$. To simplify, we set $\sfW=I_{n}$. 
\subsection{Model One}
The first model is a two-parameter Pauli model and its high-dimensional embedding. We set the prior as the Beta distribution. The parametrized model is given as follows.
\begin{equation}\label{eq:eg:1}
\begin{split}
    &S_\theta=\frac{1}{d}(I_d+\theta_1\sigma_1(d)+\theta_2\sigma_2(d)),\quad \theta_1^2+\theta_2^2\leq1,\\
    &\pi(\theta)=\mathrm{Beta}(10,2)\propto(\theta_1^2+\theta_2^2)^9(1-\theta_1^2-\theta_2^2),
\end{split}
\end{equation}
where $\sigma_1(d),\sigma_2(d)$ are defined by embedding the Pauli matrices $\sigma_{1},\sigma_{2}$ into the upper-left $2\times2$ block of a $d\times d$ matrix, with zeros elsewhere. As an example, they are given for $d=3$ as
\begin{equation}
  \sigma_1(3)=\begin{pmatrix}
    0&1&0\\
    1&0&0\\
    0&0&0
  \end{pmatrix},\quad 
    \sigma_2(3)=\begin{pmatrix}
    0&-i&0\\
    i&0&0\\
    0&0&0
  \end{pmatrix}.
\end{equation}
We numerically compute the discussed bounds for $d=2,...,14$, and plot it in Fig.~\ref{fig:eg1}.

\begin{figure}[ht]
  \tikz\datavisualization[scientific axes=clean,
  every label in legend/.style={node style={font=\normalsize}},
  x axis={label=$d$,length=7.25cm,ticks={step=1},min value=1.8, max value=14.2},
  y axis={label=$value$,length=3.8cm,ticks={style={/pgf/number format/.cd,fixed},step=0.05},max value=0.81,min value=0.48},
  visualize as smooth line/.list={a,b,c},
  legend entry options/default label in legend path/.style={straight label in legend line},
  legend={ at values={x=12.5,y=0.65}},
  a={style={mark=+,mark options=orange,mark size=1.5pt,orange},label in legend={text=$\cBNH$}},
  b={style={mark=x,mark options=red,mark size=1.5pt,red},label in legend={text=$\cBSLD$}},
  c={style={mark=o,mark options=blue,mark size=1.5pt,blue},label in legend={text=$\cBGM$}},
    ]
  data[set=a]{
    x, y
    2 , 0.6597222
    3 , 0.7175926
    4 , 0.7465278
    5 , 0.7638889
    6 , 0.775463
    7 , 0.7837302
    8 , 0.7899306
    9 , 0.7947531
    10 , 0.7986111
    11 , 0.8017677
    12 , 0.8043981
    13 , 0.8066239
    14 , 0.8085317
  }
  data[set=b]{
    x, y
    2 , 0.4861111
    3 , 0.6018519
    4 , 0.6597222
    5 , 0.6944444
    6 , 0.7175926
    7 , 0.734127
    8 , 0.7465278
    9 , 0.7561728
    10 , 0.7638889
    11 , 0.770202
    12 , 0.775463
    13 , 0.7799145
    14 , 0.7837302
  }
  data[set=c]{
    x,y
    2 , 0.6597222
    3 , 0.6018519
    4 , 0.5729167
    5 , 0.5555556
    6 , 0.5439815
    7 , 0.5357143
    8 , 0.5295139
    9 , 0.5246914
    10 , 0.5208333
    11 , 0.5176768
    12 , 0.5150463
    13 , 0.5128205
    14 , 0.5109127
    };
  \caption{Numerical results of $\cBNH$, $\cBGM$ and $\cBSLD$ for the prior distribution in (\ref{eq:eg:1}), and the dimension $d=2,3,\ldots,14.$}
  \label{fig:eg1}
\end{figure}
Figure~\ref{fig:eg1} shows that the B-NH bound and the B-SLD-type bound increase as $d$ increases, whereas the B-GM bound decreases.
In particular, we have $\cBNH=\cBGM$ at $d=2$, and $\cBGM=\cBSLD$ at $d=3$.
The coincidence of the B-SLD-type and B-GM bounds arises from the symmetry of the prior, which can be verified by modifying the prior.
Additionally, the curve of the B-NH bound is always tighter than that of the B-SLD-type bound, which is consistent with Ref.~\cite{bnhbound}.
Finally, $\cBGM \leq \cBSLD$ for $d \geq 4$, indicating a limitation of the B-GM bound in higher-dimensional settings.
\subsection{Model Two}
The second model is parametrized by two hermitian matrices with nonzero entries only on the first off-diagonals. 
\begin{align}\label{eq:eg2}
    &S_\theta=\frac{1}{d}(I_d+\theta_1\omega_1(d)+\theta_2\omega_2(d)),\quad \sqrt{\theta_1^2+\theta_2^2}\leq r(d),\\
    &\pi(\theta)\propto (\theta_1^2+\theta_2^2)^9(r(d)^2-\theta_1^2-\theta_2^2),\ 
r(d):=(2\cos(\frac{\pi}{d+1}))^{-1}.\nonumber
\end{align}
Here, the hermitian matrices $\omega_1(d),\omega_2(d)$ are defined by zero diagonal entries and non-zero elements only on the first off-diagonals, given by 1 and purely imaginary entries, respectively. For example, when $d=3$, they are
\begin{equation}
  \omega_1(3)=\begin{pmatrix}
    0&1&0\\
    1&0&1\\
    0&1&0
  \end{pmatrix},\quad 
    \omega_2(3)=\begin{pmatrix}
    0&-i&0\\
    i&0&-i\\
    0&i&0
  \end{pmatrix}.
\end{equation}
We evaluate the three bounds for $d=3,...,14$ in Fig.~\ref{fig:eg2}.

\begin{figure}[ht]
  \tikz\datavisualization[scientific axes=clean,
  every label in legend/.style={node style={font=\normalsize}},
  x axis={label=$d$,length=7.25cm,ticks={step=1},min value=2.85, max value=14.15},
  y axis={label=$value$,length=3.8cm,ticks={style={/pgf/number format/.cd,fixed},step=0.1},max value=0.345,min value=0},
  visualize as smooth line/.list={a,b,c},
  legend entry options/default label in legend path/.style={straight label in legend line},
  legend= north east inside,
  a={style={mark=+,mark options=orange,mark size=1.5pt,orange},label in legend={text=$\cBNH$}},
  b={style={mark=x,mark options=red,mark size=1.5pt,red},label in legend={text=$\cBSLD$}},
  c={style={mark=o,mark options=blue,mark size=1.5pt,blue},label in legend={text=$\cBGM$}},
    ]
  data[set=a]{
    x, y
    3 , 0.3395062
    4 , 0.2613137
    5 , 0.2283952
    6 , 0.2109068
    7 , 0.2003092
    8 , 0.1933145
    9 , 0.188411
    10 , 0.1848156
    11 , 0.1820855
    12 , 0.1799537
    13 , 0.1782508
    14 , 0.1768641
  }
  data[set=b]{
    x, y
    3 , 0.3009259
    4 , 0.2423166
    5 , 0.2160495
    6 , 0.2017584
    7 , 0.1930144
    8 , 0.1872263
    9 , 0.1831715
    10 , 0.1802067
    11 , 0.1779648
    12 , 0.1762231
    13 , 0.1748394
    14 , 0.1737194
  }
  data[set=c]{
    x,y
    3 , 0.3009259
    4 , 0.2043222
    5 , 0.154321
    6 , 0.1194225
    7 , 0.0908876
    8 , 0.0654611
    9 , 0.0417024
    10 , 0.0188966
    };
  \caption{Numerical results of $\cBNH$, $\cBGM$ and $\cBSLD$ for the prior distribution in (\ref{eq:eg2}), and the dimension $d=3,\ldots,14.$ Values of $\cBGM$ for $d \geq 11$ are not shown, as the bound becomes negative and is no longer valid.}
  \label{fig:eg2}
\end{figure}

Different from Model One (\ref{eq:eg:1}), Fig.~\ref{fig:eg2} shows that all bounds decrease as $d$ increases.
As in Model One, $\cBNH=\cBGM$ at $d=2$, $\cBGM=\cBSLD$ at $d=3$, and $\cBGM\leq\cBSLD$ for $d\geq4$.
In contrast, the B-GM bound decreases significantly in this model and becomes negative for $d\geq 11$, indicating a more severe limitation than in Model One. 
In addition, the curves of the B-NH and B-SLD-type bounds become closer as $d$ increases, although a gap always remains.

\section{Discussion and Outlook}
In summary, we introduced a new lower bound, the B-GM bound. We showed that the B-GM bound is attainable for qubit systems and derived the corresponding optimal PVM. Finally, we presented two numerical comparisons of the discussed bounds, which reveal a limitation of the B-GM bound in high-dimensional systems and indicate the equality between the B-NH and B-GM bounds for these models.
As future directions, we plan to investigate the behavior of the B-GM bound in settings with more parameters than those considered in this work. 

\section*{Notes added}
After we have completed this work, we have noticed a recent work by Albarelli \textit{et al.} \cite{albarelli2026measurementincompatibilitybayesianmultiparameter}. 
Ref.~\cite{albarelli2026measurementincompatibilitybayesianmultiparameter} have obtained the closed form of the B-GM bound for a general two parameter qubit model as Eq.~(C13) of \cite{albarelli2026measurementincompatibilitybayesianmultiparameter} together with the optimal estimation strategy. The difference from this work is that they have solved the B-NH bound directly to obtained the attainable bound. In our work, we derived the attainable bound for any qubit model (two- and three-parameter models) by utilizing the B-GM inequality.  

\section*{Acknowledgment}
The work is partly supported by JSPS KAKENHI Grant Number JP24K14816 and ERATO ``Super Quantum Entanglement" (Grant No. JPMJER2402) from JST.



\end{document}